\newtheorem{thm}{Theorem}
\newtheorem{cor}[thm]{Corollary}
\newtheorem{prop}[thm]{Proposition}
\newtheorem{defn}[thm]{Definition}
\title{\LARGE \bf
A smoothing theory for open quantum systems: \\ The least mean square approach
}
\author{Kentaro Ohki$^{1}$
\thanks{*This work was supported by JSPS KAKENHI Grant Number 80241941 and 25820178}
\thanks{$^{1}$Kentaro Ohki is with Graduate School of Informatics, Kyoto University, Yoshida-Honmachi, Sakyo-ku, Kyoto, Japan
        {\tt\small ohki@i.kyoto-u.ac.jp}}%
}
\begin{document}

\maketitle
\thispagestyle{empty}
\pagestyle{empty}

\begin{abstract}
Unlike the classical smoothing theory, 
it is well known that quantum smoothers are, in general, not well--defined by the quantum conditional expectation. 
The purpose of this paper is to propose a new quantum smoothing theory 
based on the least mean squared errors.  
The least mean square estimate of quantum physical quantity composes from symmetric part and skew part, 
and we developed the recursive equations, respectively.  
\end{abstract}

\section{INTRODUCTION}

	Estimation theory in quantum physics has now become an important field in technologies 
	and it has been studied in recent decades \cite{paris2004quantum,petz2008qit,busch2014quantum}.  
	One of the most important estimation methods is the least mean square estimation and 
	it is widely used in classical and quantum statistics \cite{busch2014quantum}.  
	When an estimand obeys a dynamical equation, 
	we sometimes develop a dynamical estimator which is driven by measurement outcomes.  
	The dynamical least mean square estimator is called a {\it filter} if the time of the estimand is the end of the interval of the measurement records and is called a {\it smoother} if the time of the estimand is in the interval of the measurement records  \cite{krishnan1984nonlinear}.  
	A key notion for the classical least mean square estimation is the conditional expectation and 
	it was expanded to quantum statistics.   
	The quantum filtering theory based on the quantum conditional expectation was initiated by Belavkin \cite{Belavkin_nondemolition,Bouten_quant_filtering,Bouten_discrete_quant_filtering} and it has been applied to many of quantum control problems \cite{wiseman_milburn_q-control,altafini2012modeling,belavkin2007emq,dong2009survey,james2014quantum,zhang2014quantum}.  
	
	On the other hand, quantum smoothers are not always described by the quantum conditional expectation 
	because of the non-commutative nature among quantum physical quantities.   
	Some recent studies provide new smoothing methods  \cite{yanagisawa2007smoothing,tsang2009time,tsang2014bayesian,guevara2015quantum,gammelmark2013past} .  
	Especially, Gammelmark et al. \cite{gammelmark2013past} developed a fixed interval smoothing theory based on the {\it weak value} \cite{aharonov1988result,aharonov2008quantum}, which is recently focused in quantum theory  \cite{tsang2014bayesian,guevara2015quantum,gammelmark2013past,das2014estimation,combes2014quantum,li2014amplification,dressel2014understanding} and experiments \cite{campagneibarcq2014observing,rybarczyk2014past,tan2015prediction}.  
	This Bayesian approach gives complex valued estimate and therefore its physical interpretation is often discussed \cite{dressel2014understanding}.  
	The weak value is expected as a new quantum metrology for such objectives as detecting quantum wave forms \cite{tsang2009time} and single photon detection \cite{li2014amplification}, and this is one of the goals of developing quantum smoothers.  

	In this paper, we revisit the least mean square method for quantum dynamical systems to develop quantum smoothers on classical probabilistic space.  
	In contrast to Gammelmark et al.'s work, our proposal smoother is implemented by forward equations.  
	We introduce two pre--inner products and show two least mean square estimates which depend on each pre--inner product.  
	Another least mean square estimation approach for quantum dynamical systems was developed by Amini et al. \cite{amini2014quantum}.  
	They derived 	the linear least mean square estimator for linear quantum systems 
	and their estimator is realized in quantum systems.  
	It implies the estimates should be physical quantities and therefore cannot give non-causal estimates.  
	In contrast to their estimator, ours are computed in classical computers, i.e., 
	it is not necessary that each estimate is realized as a quantum physical quantity; 
	in fact, our estimator can give complex-valued estimates.

	There are two contributions in this paper.  
	First, we introduced the symmetric pre--inner product and 
	characterized the symmetric part and skew part of the quantum least mean square estimate.  
	We show there are two optimal estimations which depend on pre--inner products.  
	Second, we developed new quantum smoothers for fixed point smoothing problems.  
	Though we only show the quantum smoothers for fixed point smoothing problems, 
	the derivation is easily expanded to other smoothing problems, e.g. fixed lag smoothing problems.  

	This paper is organized as follows; 
	we review basic mathematical formulation of quantum theory in Section 2.  
	In Section 3, two inner products and their properties are shown.  
	We then introduce a dynamics of an open quantum system to be estimated and its filtering equation in Section 4.  
	Finally, the proposal quantum smoother is derived in Section 5.

	\subsection*{Notation}
	$\mathbb{R}$ and $\mathbb{C}$ are real numbers and complex numbers, respectively,   
	and $\mathrm{i}:= \sqrt{-1}$.  
	${\mathcal{H}}$ is a complex Hilbert space and we also denote ${\mathcal{H}}_{X}$ if it is the Hilbert space of the system $X$.  
	Any linear operator on a Hilbert space $\mathcal{H}$ is denoted by hat, e.g., $\hat{X}$.  
	When positive operators $\hat{X}$ and $\hat{Y}$ satisfy $\hat{X} = \hat{Y}^{2}$, we denote $\hat{Y} = \sqrt{\hat{X}}$.  
	The absolute value of operator is defined $|\hat{X}| := \sqrt{\hat{X}^{\ast} \hat{X}}$.  
	$\mathcal{L}(\mathcal{H}) $ is a set of linear bounded operators on the Hilbert space $\mathcal{H}$.  
	$\hat{X}\geq 0$ means that $\hat{X} \in \mathcal{L}(\mathcal{H})$ is a positive operator and $\hat{X}^{\ast}$ implies the conjugate operator of $\hat{X}$.  
	$\mathrm{Tr}[\bullet ] : \mathcal{L}(\mathcal{H}) \to \mathbb{C}$ is the trace on linear operators.  
	${\mathcal{S}}({\mathcal{H}}) := \{ \hat{\rho} \in {\mathcal{L}}({\mathcal{H}}) \ | \ 
	\hat{\rho} \geq 0,\ {\mathrm{Tr}}[\hat{\rho}] = 1 \}$ is a set of density operators.  
	$\hat{1}_{\mathcal{H}}$ is the identity operator on $\mathcal{H}$ and we sometimes omit its subscript.  
	Denote $[\hat{X} , \hat{Y} ]_{\pm} := \hat{X}\hat{Y} \pm \hat{Y} \hat{X}$, $\forall \hat{X},\hat{Y} \in \mathcal{L}(\mathcal{H})$.  
	$\otimes$ represents the Kronecker product for matrices and the tensor product for operators or spaces.

\section{BASIC DEFINITIONS OF QUANTUM THEORY}

	Here we briefly review the quantum theory.  
	For details, see, e.g., \cite{wiseman_milburn_q-control,holevo1982pas}.  
	
	
	Every quantum system is described by a suitably defined Hilbert space $\mathcal{H}$.  
	All of the quantum physical quantities are denoted by self-adjoint operators on $\mathcal{H}$.  
	In this paper, we only consider linear bounded operators except quantum noise operators.  
	We denote a set of linear bounded operators on $\mathcal{H}$ by $\mathcal{L}( \mathcal{H})$.  
	The observation of any quantum physical quantity is a randomly chosen number 
	from the spectrum of the corresponding self-adjoint operator.  
	Random outcomes of all bounded operators make the quantum statistics 
	and the quantum expectation $\mathbb{P}_{\hat{\rho}}$ is defined as 
	${\mathbb{P}}_{\hat{\rho}} [\hat{X} ] := {\mathrm{Tr}}[ \hat{\rho} \hat{X}]$, 
	$\hat{\rho} \in \mathcal{S}(\mathcal{H}) $.  
	The quantum version of $\sigma$--measurable functions is von Neumann algebra, which is, roughly speaking, an algebra generated by projection operators with algebraic operations \cite{Bouten_discrete_quant_filtering}.  
	Let $\mathcal{A} \subseteq \mathcal{L}(\mathcal{H})$ be a von Neumann subalgebra.  
	A pair $(\mathcal{A} , \mathbb{P}_{\hat{\rho}} )$ is called {\it the quantum probability space}.  
	For a given quantum probability space $(\mathcal{A} , \mathbb{P}_{\hat{\rho}} )$, 
	a subalgebra $\mathcal{N}_{\hat{\rho}} := 
	\{ \hat{X} \in \mathcal{A} \ | \ \mathbb{P}_{\hat{\rho}} [\hat{X}^{\ast}\hat{X}] = 0  \} $ 
	of $\mathcal{A}$ is a quantum version of the measure zero set with respect to $\mathbb{P}_{\hat{\rho}}$, 
	called the left kernel of $\mathbb{P}_{\hat{\rho}}$  \cite{takesaki2003toa1}.  
	The left kernel $\mathcal{N}_{\hat{\rho}}$ is not empty because it always includes $0$.   
	Moreover, $\mathcal{N}_{\hat{\rho}}$ is a left ideal and satisfies 
	$
	\mathbb{P}_{\hat{\rho} } \left[ ( \hat{X} + \hat{Z}_{1} )^{\ast} ( \hat{Y} + \hat{Z}_{2} ) \right]
	=
	\mathbb{P}_{\hat{\rho} } \left[ \hat{X}^{\ast} \hat{Y} \right]
	$
	for any $\hat{X}, \hat{Y} \in \mathcal{A}$ and $\hat{Z}_{1} , \hat{Z}_{2} \in \mathcal{N}_{\hat{\rho}}$ 
	(see, e.g., \cite[Lemma 9.6 in Chap. 1]{takesaki2003toa1}).  
	If for $\hat{X} ,\hat{Y} \in \mathcal{A}$, there exists $\hat{Z} \in \mathcal{N}_{\hat{\rho}}$ s.t. $\hat{X} = \hat{Y} + \hat{Z}$, then we denote $\hat{X} = \hat{Y}$, $\mathbb{P}_{\hat{\rho}}$--a.s. or $\hat{\rho}$--a.s. for short.

\section{LEAST MEAN SQUARE ESTIMATION}

	In this section, we introduce the symmetric and skew part of the best estimation 
	in the sense of the semi-norm defined by the pre-inner product below and show several properties of them.  
	Let $\mathcal{Y}$ be a commutative $\ast$--subalgebra of $\mathcal{L}(\mathcal{H})$.  
	We introduce another $\ast$--subalgebra whose elements commute with all of the elements in $\mathcal{Y}$; 
	\begin{align*}
	\mathcal{Y}^{\prime} := 
	\{  \hat{X} \in \mathcal{L}(\mathcal{H}) 
	\ | \  
	\hat{X} \hat{Y} = \hat{Y}\hat{X} ,\ \forall \hat{Y} \in \mathcal{Y}  
	\}
	.
	\end{align*}
	$\mathcal{Y}^{\prime}$ is called a commutant of $\mathcal{Y}$ in $\mathcal{L}(\mathcal{H})$.  
	Hereafter we assume $\mathcal{Y} = (\mathcal{Y} ^{\prime}) ^{\prime}$, i.e., $\mathcal{Y}$ is a commutative von Neumann subalgebra \cite{Bouten_quant_filtering,takesaki2003toa1}.  
	Every von Neumann algebra is a generalization of the set of the $\sigma$--measurable bounded functions and especially 
	every commutative von Neumann algebra is isomorphic to the corresponding set of the $\sigma$--measurable bounded functions.  
	Note that $\mathcal{Y}^{\prime}$ is generally non-commutative $\ast$--subalgebra.

	\subsection{Definitions} 
	
	We introduce three approximations of a given $\hat{X} \in \mathcal{L} ( \mathcal{H} )$ here 
	and show that two of them provides the best estimations in different measures in following subsection.  
	All of them is based on the following pre-inner products \cite{amari2007mig}.  
	
	\begin{defn}
	\label{defn_cdc2015_inner_products}
	
	For given $\hat{\rho} \in \mathcal{S}(\mathcal{H})$, 	
	\begin{enumerate}
	\item {\rm the pre--inner product} $\langle \bullet , \bullet \rangle _{\hat{\rho}} 
	:
	\mathcal{L}( \mathcal{H} ) \times \mathcal{L}( \mathcal{H} ) \to \mathbb{C}
	$ 
	is defined by $\langle \hat{X} , \hat{Y} \rangle _{\hat{\rho}} := \mathbb{P} _{\hat{\rho}} \left[ \hat{X} ^{\ast} \hat{Y} \right]$.  
	
	\item {\rm the symmetric pre--inner product} $\langle \langle \bullet , \bullet \rangle \rangle _{\hat{\rho}} 
	:
	\mathcal{L}( \mathcal{H} ) \times \mathcal{L}( \mathcal{H} ) \to \mathbb{C}
	$ is 
	defined by $\langle \langle \hat{X} , \hat{Y} \rangle \rangle _{\hat{\rho}} := 
	\frac{1}{2} \mathbb{P} _{\hat{\rho}} \left[ \hat{X} ^{\ast} \hat{Y} + \hat{Y} \hat{X} ^{\ast} \right]$.  
	
	\end{enumerate}
	
	\end{defn}
	
	Note that $\langle \langle \hat{X} , \hat{Y} \rangle \rangle _{\hat{\rho}}$  is not 
	the real part of $\langle \hat{X} , \hat{Y} \rangle _{\hat{\rho}}$.  
	These pre--inner products satisfy the Cauchy--Schwarz inequality (see, e.g., \cite[Prop. 9.5 in Chap. 1]{takesaki2003toa1}).   
	$\langle \hat{X} ,\hat{X} \rangle _{\hat{\rho}} =0$ is the necessary and sufficient condition for 
	$\hat{X} \in \mathcal{N}_{\hat{\rho}}$, though, 
	$\hat{X} \in \mathcal{N}_{\hat{\rho}}$ does not implies $\langle \langle \hat{X} ,\hat{X} \rangle \rangle _{\hat{\rho}} =0$.  
	This does not bother our exploration of the least means square estimation because 
	if $\hat{X} \in \mathcal{N}_{\hat{\rho}} \cap \mathcal{Y} $, then $\langle \hat{X} ,\hat{X} \rangle _{\hat{\rho}} = \langle \langle \hat{X} ,\hat{X} \rangle \rangle _{\hat{\rho}} =0$.  
	This is shown by the Cauchy--Schwarz inequality and commutativity of $\mathcal{Y}$.  
	We use two measures to find the best approximation in $\mathcal{Y}$, and the null space there is common between them.

	Let us define the quantum conditional expectation (see, e.g., \cite[Sec. 3]{Bouten_quant_filtering} or \cite[Prop. 2.36]{takesaki2003toa1}).

	\begin{defn}[Quantum conditional expectation]
	
	Let $(\mathcal{L}(\mathcal{H}) , \mathbb{P}_{\hat{\rho}})$ be a quantum probability space and $\mathcal{Y}$ be a commutative von Neumann sub--algebra of $\mathcal{L}(\mathcal{H})$.  
	A linear operator $\hat{Q} \in \mathcal{Y}$ is called a version of the {\it quantum conditional expectation} 
	if there exists $\hat{Q} \in \mathcal{Y}$ satisfies
	\begin{align}
	\langle \hat{Z} , \hat{X} - \hat{Q} \rangle _{\hat{\rho}} = 0
	,\quad \forall \hat{Z} \in \mathcal{Y}
	\label{eq_quantum_conditional_expectation}
	\end{align}
	for arbitrary fixed $\hat{X} \in \mathcal{Y}^{\prime}$.  
	Then we denote $\hat{Q} = \mathbb{P}_{\hat{\rho}} \left[ \hat{X} | \mathcal{Y} \right] $.  
	
	\end{defn}
	
	Some properties of the quantum conditional expectation are shown in, for example, \cite{Bouten_quant_filtering}.  
	The definition of the quantum conditional expectation implies that the $\hat{X} - \mathbb{P}_{\hat{\rho}} \left[ \hat{X} | \mathcal{Y} \right]$ and the commutative sub-algebra $\mathcal{Y} $ are orthogonal under state $\mathbb{P}_{\hat{\rho}}$.  
	We extend the definition of orthogonality to non-commutative regime.

	\begin{defn}
	\label{def_least_square_approximation}
	\label{def_anti_synmetric_approximation}
	
	Let $(\mathcal{L}(\mathcal{H}) , \mathbb{P}_{\hat{\rho}})$ be a quantum probability space 
	and $\mathcal{Y}$ be a commutative von Neumann sub--algebra of $\mathcal{L}(\mathcal{H})$.  
	For arbitrary fixed $\hat{X} \in \mathcal{L}(\mathcal{H})$, we define following operators: 
	\begin{enumerate}
	\item 
	A linear operator $\hat{Q} \in \mathcal{Y}$ is called a version of {\it symmetric quantum least mean square estimate} 
	if there exists $\hat{Q} \in \mathcal{Y}$ that satisfies
	\begin{align}
	\langle \langle \hat{Z} , \hat{X} - \hat{Q}\rangle \rangle _{\hat{\rho}} = 0
	,\quad \forall \hat{Z} \in \mathcal{Y}
	.
	\label{eq_least_square_approximate}
	\end{align}  
	Then we denote $\hat{Q} = \mathbb{Q}_{\hat{\rho}}^{+} \left[ \hat{X} | \mathcal{Y} \right] $. 
	
	\item 
	A linear operator $\hat{Q} \in \mathcal{Y}$ is called a version of {\it the mean non-commutativity with respect to $\mathcal{Y}$} 
	if there exists $\hat{Q} \in \mathcal{Y}$ that satisfies
	\begin{align}
	\mathbb{P}_{\hat{\rho}} \left[ \hat{Z} \hat{X}  - \hat{X} \hat{Z} \right] 
	= 
	2\mathbb{P}_{\hat{\rho}} \left[ \hat{Q} \hat{Z} \right] 
	, \quad \forall \hat{Z} \in \mathcal{Y}
	.
	\label{eq_commutation_approximation}
	\end{align}
	Then we denote $\hat{Q} = \mathbb{Q}_{\hat{\rho}} ^{-} \left[ \hat{X} | \mathcal{Y} \right] $.

	\item $\mathbb{Q}_{\hat{\rho}}  \left[ \hat{X} | \mathcal{Y} \right] 
	:= 
	\mathbb{Q}_{\hat{\rho}} ^{+} \left[ \hat{X} | \mathcal{Y} \right] 
	+ 
	\mathbb{Q}_{\hat{\rho}} ^{-} \left[ \hat{X} | \mathcal{Y} \right]$ is called {\it the quantum least mean square estimate of $\hat{X}$ with respect to $\mathcal{Y}$}.

	\end{enumerate}

	\end{defn}

	Roughly speaking, $ \mathbb{Q}_{\hat{\rho}}^{+} \left[ \hat{X} | \mathcal{Y} \right]$ and 
	$ \mathbb{Q}_{\hat{\rho}}^{-} \left[ \hat{X} | \mathcal{Y} \right]$ are the symmetric part and the skew part of the least mean square estimate, respectively.  
	In fact, it is true for $\hat{X} = \hat{X}^{\ast}$.  
	We call Eq. \eqref{eq_least_square_approximate} and \eqref{eq_commutation_approximation} {\it the symmetric orthogonal condition} and {\it the skew symmetric orthogonal condition}, respectively.  
	If $\mathcal{N} _{\hat{\rho}} \cap \mathcal{Y} \neq \{ 0 \} $, then there are many operators that satisfy above conditions.  
	This is why we use ``a version of."  
	Obviously, $\mathbb{P}_{\hat{\rho}} \left[ \hat{X} \right] 
	= 
	\mathbb{P}_{\hat{\rho}} \left[ \mathbb{Q}_{\hat{\rho}} \left[ \hat{X} | \mathcal{Y} \right]  \right]  
	=
	\mathbb{P}_{\hat{\rho}} \left[ \mathbb{Q}_{\hat{\rho}}^{+} \left[ \hat{X} | \mathcal{Y} \right]  \right]  
	$ holds, i.e., these are unbiased estimates.  
	The name ``the symmetric quantum least mean square estimate" 
	and 
	``the quantum least mean square estimate" are originated from 
	Proposition \ref{prop_least_mean_squareness} below.  
	Since these estimates do not satisfy the positivity in general, neither of them is a version of quantum conditional expectation.   
	Note that from the linearity of the quantum expectation, 
	\begin{align*}
	\mathbb{P}_{\hat{\rho}} \left[ \hat{Z} \hat{X}  \right]
	=
	\mathbb{P}_{\hat{\rho}} \left[ 
	\mathbb{Q}_{\hat{\rho}}^{+} \left[ \hat{X} | \mathcal{Y} \right]  \hat{Z} 
	\right] 
	+ 
	\mathbb{P}_{\hat{\rho}} \left[ 
	\mathbb{Q}_{\hat{\rho}}^{-} \left[ \hat{X} | \mathcal{Y} \right] \hat{Z} 
	\right] 
	\end{align*}
	implies the orthogonality in the sense of $\langle \bullet , \bullet \rangle _{\hat{\rho}}$; 
	\begin{align*}
	\langle \hat{Z} , \hat{X} - \mathbb{Q}_{\hat{\rho}} \left[ \hat{X} | \mathcal{Y} \right] \rangle _{\hat{\rho}}
	=
	0
	, \quad 
	\forall \hat{Z} \in \mathcal{Y}
	.  
	\end{align*}
	
	Statistical and physical meaning of $\mathbb{Q}_{\hat{\rho}} ^{-} \left[ \hat{X} | \mathcal{Y} \right] $ is unclear, though, 
	it plays a role as a measure of noncommutativity between $\hat{X}$ and $\mathcal{Y}$.  
	If $\hat{X}$ and $\hat{Z}$ are Hilbert--Schmidt class operators, 
	$[\hat{X} , \hat{Z} ]_{-}$ is orthogonal to both of $\hat{X}$ and $\hat{Z}$ in the sense of Hilbert--Schmidt inner product.  
	The operator $\mathbb{Q}_{\hat{\rho}}^{-} \left[ \hat{X} | \mathcal{Y} \right]$ represents 
	a measure of the ``$\hat{\rho}$--direction" component of the orthogonal direction 
	against to the both of $\hat{X}$ and $\mathcal{Y}$.

	\subsection{Basic properties}
	
	
	From the definitions, following properties hold.  
	\begin{enumerate}
	
	
	\item {\bf (linearity)} $\mathbb{Q}_{\hat{\rho}}^{\pm} \left[ \bullet \ | \ \mathcal{Y} \right] : \mathcal{L}(\mathcal{H}) \to \mathcal{Y}$ is linear.  
		
	\item  {\bf (self-adjointness and skewness)} $\mathbb{Q}_{\hat{\rho}}^{\pm} \left[ \hat{X} \ | \ \mathcal{Y} \right] ^{\ast} = \pm \mathbb{Q}_{\hat{\rho}}^{\pm} \left[ \hat{X} \ | \ \mathcal{Y} \right]$, 
	$\mathbb{P}_{\hat{\rho}}$--a.s.  
	for $\hat{X} = \hat{X}^{\ast} \in \mathcal{L}(\mathcal{H})$, 
	and 
	$\mathbb{Q}_{\hat{\rho}}^{\pm} \left[ \hat{X} \ | \ \mathcal{Y} \right] ^{\ast} = \mp \mathbb{Q}_{\hat{\rho}}^{\pm} \left[ \hat{X} \ | \ \mathcal{Y} \right]$, $\mathbb{P}_{\hat{\rho}}$--a.s. for 
	$\hat{X} = -\hat{X}^{\ast} \in \mathcal{L}(\mathcal{H})$.  
	
	\end{enumerate}
	
	From above properties, the $\mathbb{Q}_{\hat{\rho}} \left[ \hat{X} \ | \ \mathcal{Y} \right]$ could be a normal operator even if $\hat{X}$ is a self-adjoint operator.  
	In this sense, $\mathbb{Q}_{\hat{\rho}}^{\pm} \left[ \hat{X} \ | \ \mathcal{Y} \right]$ are real and imaginary part of $\mathbb{Q}_{\hat{\rho}} \left[ \hat{X} \ | \ \mathcal{Y} \right]$, respectively.  
	Moreover, the uniqueness also holds.  

	\begin{prop}[Uniqueness]
	\label{prop_uniqueness}
	
	For any $\hat{X} \in \mathcal{L}(\mathcal{H})$, $\mathbb{Q}_{\hat{\rho}}^{\pm} \left[ \hat{X} \ | \ \mathcal{Y} \right]$ is uniquely determined $\mathbb{P}_{\hat{\rho}}$--a.s.  
	
	\end{prop}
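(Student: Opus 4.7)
The plan is the standard uniqueness--by--subtraction argument, applied separately to each orthogonality condition, with commutativity of $\mathcal{Y}$ as the single crucial ingredient that lets both cases collapse onto the null space $\mathcal{N}_{\hat{\rho}}$ of the ordinary pre--inner product. The only observation I really need is the one already stated in the text: on the commutative subalgebra $\mathcal{Y}$, the seminorms induced by $\langle\bullet,\bullet\rangle_{\hat{\rho}}$, by $\langle\langle\bullet,\bullet\rangle\rangle_{\hat{\rho}}$, and by $\hat{D}\mapsto\mathbb{P}_{\hat{\rho}}[\hat{D}\hat{D}^{\ast}]$ all coincide, so vanishing on the diagonal in any one of them forces membership in $\mathcal{N}_{\hat{\rho}}$.

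For the symmetric part, I would assume that $\hat{Q}_1,\hat{Q}_2\in\mathcal{Y}$ are two versions of $\mathbb{Q}_{\hat{\rho}}^{+}[\hat{X}|\mathcal{Y}]$ and set $\hat{D}:=\hat{Q}_1-\hat{Q}_2\in\mathcal{Y}$. Subtracting the symmetric orthogonal condition \eqref{eq_least_square_approximate} written for $\hat{Q}_1$ and for $\hat{Q}_2$ yields $\langle\langle \hat{Z},\hat{D}\rangle\rangle_{\hat{\rho}}=0$ for every $\hat{Z}\in\mathcal{Y}$. Since $\mathcal{Y}$ is a $\ast$--subalgebra, I may set $\hat{Z}=\hat{D}$ and obtain $\langle\langle \hat{D},\hat{D}\rangle\rangle_{\hat{\rho}}=0$. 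Commutativity of $\mathcal{Y}$ gives $\hat{D}^{\ast}\hat{D}=\hat{D}\hat{D}^{\ast}$, so this quantity equals $\langle \hat{D},\hat{D}\rangle_{\hat{\rho}}$, which forces $\hat{D}\in\mathcal{N}_{\hat{\rho}}$, i.e., $\hat{Q}_1=\hat{Q}_2$ $\mathbb{P}_{\hat{\rho}}$--a.s.

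For the skew part, let $\hat{Q}_1,\hat{Q}_2\in\mathcal{Y}$ be two versions of $\mathbb{Q}_{\hat{\rho}}^{-}[\hat{X}|\mathcal{Y}]$ and again put $\hat{D}:=\hat{Q}_1-\hat{Q}_2$. Subtracting \eqref{eq_commutation_approximation} for the two candidates leaves $\mathbb{P}_{\hat{\rho}}[\hat{D}\hat{Z}]=0$ for all $\hat{Z}\in\mathcal{Y}$. Plugging in $\hat{Z}=\hat{D}^{\ast}$, which belongs to $\mathcal{Y}$ because $\mathcal{Y}$ is a von Neumann subalgebra, yields $\mathbb{P}_{\hat{\rho}}[\hat{D}\hat{D}^{\ast}]=0$. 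Commutativity of $\mathcal{Y}$ again rewrites this as $\langle\hat{D},\hat{D}\rangle_{\hat{\rho}}=0$, so $\hat{D}\in\mathcal{N}_{\hat{\rho}}$ and $\hat{Q}_1=\hat{Q}_2$ $\mathbb{P}_{\hat{\rho}}$--a.s.

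There is essentially no hard step: the work is front--loaded into the commutativity of $\mathcal{Y}$ and the stability of $\mathcal{Y}$ under $\ast$ and addition. The only point worth flagging in the write-up is that the ``test vectors'' $\hat{D}$ and $\hat{D}^{\ast}$ really do lie in $\mathcal{Y}$, so both defining conditions can be tested against them; everything else is bookkeeping.
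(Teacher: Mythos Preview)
Your proposal is correct and follows essentially the same subtraction argument as the paper's own proof: subtract the two defining conditions, test against the difference (or its adjoint), and use commutativity of $\mathcal{Y}$ to identify the resulting quantity with $\langle \hat{D},\hat{D}\rangle_{\hat{\rho}}$, forcing $\hat{D}\in\mathcal{N}_{\hat{\rho}}$. The only cosmetic difference is that for the skew part you test with $\hat{Z}=\hat{D}^{\ast}$ whereas the paper tests with $\hat{Z}=\hat{D}$ after first rewriting the subtracted condition as $\langle \hat{D},\hat{Z}\rangle_{\hat{\rho}}=0$; both are equivalent once one uses that $\mathcal{Y}$ is a commutative $\ast$--algebra.
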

	
	\begin{proof}
	Suppose $\hat{Y}_{1},\ \hat{Y}_{2} \in \mathcal{Y}$ satisfy the symmetric orthogonal condition \eqref{eq_least_square_approximate} for $\hat{X}$.  
	Then, 
	$\langle \langle \hat{X} - \hat{Y}_{2} , \hat{Z} \rangle \rangle _{\hat{\rho}} = 0$ minus 
	$\langle \langle \hat{X} - \hat{Y}_{1} , \hat{Z} \rangle \rangle _{\hat{\rho}} = 0$ gives 
	\begin{align*}
	\langle \langle \hat{Y}_{1} - \hat{Y}_{2} , \hat{Z} \rangle \rangle _{\hat{\rho}}
	=
	\langle \hat{Y}_{1} - \hat{Y}_{2} , \hat{Z} \rangle _{\hat{\rho}}
	=0, 
	\quad
	\forall \hat{Z} \in \mathcal{Y}
	.
	\end{align*} 
	Since $\hat{Y}_{1} - \hat{Y}_{2} \in \mathcal{Y}$, choosing $\hat{Z} = ( \hat{Y}_{1} - \hat{Y}_{2} ) $ makes 
	$\mathbb{P}_{\hat{\rho}} \left[ (\hat{Y}_{1} - \hat{Y}_{2} )^{\ast} (\hat{Y}_{1} - \hat{Y}_{2} ) \right] =0$.  
	It implies $\hat{Y}_{1} = \hat{Y}_{2}$ under the state $\mathbb{P}_{\hat{\rho}}$ or $\hat{Y}_{1} - \hat{Y}_{2} \in \mathcal{N}_{\hat{\rho}}$.  
	
	Similarly, suppose $\hat{Y}_{3},\ \hat{Y}_{4} \in \mathcal{Y}$ satisfy the skew symmetric orthogonal condition \eqref{eq_commutation_approximation} for $\hat{X}$.  
	Then, 
	\begin{align*}
	\langle \hat{Y}_{3} - \hat{Y}_{4} , \hat{Z} \rangle _{\hat{\rho}}
	=0, 
	\quad
	\forall \hat{Z} \in \mathcal{Y}, 
	\end{align*} 
	and choosing $\hat{Z} = (\hat{Y}_{3} - \hat{Y}_{4} )$ shows the uniqueness holds $\mathbb{P}_{\hat{\rho}}$--a.s.  

	\end{proof}
	
	Since $\langle \hat{X} , \hat{X} \rangle _{\hat{\rho}} 
	\neq \langle \langle \hat{X} , \hat{X} \rangle \rangle _{\hat{\rho}}$ in general, 
	we can consider two different least mean square estimations.  
	$\mathbb{Q}_{\hat{\rho}} \left[ \hat{X} | \mathcal{Y} \right]$ and 
	$\mathbb{Q}_{\hat{\rho}}^{+} \left[ \hat{X} | \mathcal{Y} \right]$ are 
	the least mean error estimates in the sense of each semi-norm defined by each pre-inner product in Definition \ref{defn_cdc2015_inner_products}, respectively.  
	
	%
	\begin{prop}
	\label{prop_least_mean_squareness}
	\label{prop_least_mean_squareness_symmetric}\
	
	\begin{enumerate} 
	\item 
	For arbitrary $\hat{X} \in \mathcal{L}(\mathcal{H})$, 
	\begin{align*}
	& \langle 
	\hat{X} - \mathbb{Q}_{\hat{\rho}} \left[ \hat{X} | \mathcal{Y} \right] , 
	\hat{X} - \mathbb{Q}_{\hat{\rho}} \left[ \hat{X} | \mathcal{Y} \right] 
	\rangle _{\hat{\rho}}
	\nonumber \\ 
	\leq &
	\langle 
	\hat{X} - \hat{Z} , 
	\hat{X} - \hat{Z} 
	\rangle _{\hat{\rho}}, 
	\quad  \forall \hat{Z} \in \mathcal{Y}
	\nonumber .
	\end{align*}

	\item For arbitrary $\hat{X} \in \mathcal{L}(\mathcal{H})$, 
	\begin{align*}
	& \langle \langle 
	\hat{X} - \mathbb{Q}_{\hat{\rho}}^{+}  \left[ \hat{X} | \mathcal{Y} \right]  , 
	\hat{X} - \mathbb{Q}_{\hat{\rho}}^{+}  \left[ \hat{X} | \mathcal{Y} \right] 
	\rangle \rangle _{\hat{\rho}}
	\nonumber \\ \leq &
	\langle \langle 
	\hat{X} - \hat{Z} , 
	\hat{X} - \hat{Z} 
	\rangle \rangle _{\hat{\rho}}
	,\quad  \forall \hat{Z}  \in \mathcal{Y}
	\nonumber .
	\end{align*}

	\end{enumerate}

	\end{prop}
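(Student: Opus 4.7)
The plan is to prove both items by the standard Hilbert‐space projection (Pythagorean) argument, adapted to the pre--inner products at hand. For each item I would start from the corresponding orthogonality condition, decompose an arbitrary candidate $\hat Z\in\mathcal Y$ about the putative minimizer, and expand the squared semi-norm so that the cross terms are killed by orthogonality, leaving a non-negative remainder.

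More concretely, for Part~(1) set $\hat Q := \mathbb{Q}_{\hat\rho}[\hat X\mid\mathcal Y]=\mathbb{Q}^{+}_{\hat\rho}[\hat X\mid\mathcal Y]+\mathbb{Q}^{-}_{\hat\rho}[\hat X\mid\mathcal Y]$. The excerpt already derives, from the additive combination of the symmetric and skew--symmetric orthogonality conditions together with linearity of $\mathbb{P}_{\hat\rho}$, that
\begin{align*}
\langle \hat Z,\hat X-\hat Q\rangle_{\hat\rho}=0,\qquad\forall\hat Z\in\mathcal Y.
\end{align*}
Given any $\hat Z\in\mathcal Y$, I would write $\hat X-\hat Z=(\hat X-\hat Q)+(\hat Q-\hat Z)$ with $\hat Q-\hat Z\in\mathcal Y$. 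Expanding the sesquilinear form $\langle\cdot,\cdot\rangle_{\hat\rho}$ yields two cross terms of the form $\langle \hat Q-\hat Z,\hat X-\hat Q\rangle_{\hat\rho}$ and its conjugate, both of which vanish by the displayed orthogonality. What remains is $\langle \hat X-\hat Q,\hat X-\hat Q\rangle_{\hat\rho}+\langle \hat Q-\hat Z,\hat Q-\hat Z\rangle_{\hat\rho}$, and non-negativity of the pre--inner product on the diagonal gives the claimed inequality.

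For Part~(2), set $\hat Q^{+} := \mathbb{Q}^{+}_{\hat\rho}[\hat X\mid\mathcal Y]$ and repeat the argument with $\langle\langle\cdot,\cdot\rangle\rangle_{\hat\rho}$ in place of $\langle\cdot,\cdot\rangle_{\hat\rho}$. The defining condition \eqref{eq_least_square_approximate} supplies $\langle\langle\hat Z,\hat X-\hat Q^{+}\rangle\rangle_{\hat\rho}=0$ for all $\hat Z\in\mathcal Y$. The only extra wrinkle is checking that $\langle\langle\cdot,\cdot\rangle\rangle_{\hat\rho}$ is conjugate symmetric so that the cross terms in the expansion of $\langle\langle\hat X-\hat Z,\hat X-\hat Z\rangle\rangle_{\hat\rho}$ about $\hat Q^{+}$ are genuinely complex conjugates of one another; a direct computation using $\mathbb{P}_{\hat\rho}[\hat A^{*}]=\overline{\mathbb{P}_{\hat\rho}[\hat A]}$ shows $\overline{\langle\langle \hat Y,\hat X\rangle\rangle_{\hat\rho}}=\langle\langle \hat X,\hat Y\rangle\rangle_{\hat\rho}$, so both cross terms vanish and the Pythagorean identity goes through.

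I do not anticipate a genuine obstacle: everything reduces to linear/sesquilinear manipulations once orthogonality is in hand. The only point requiring small care is that $\langle\langle\cdot,\cdot\rangle\rangle_{\hat\rho}$ is merely a pre--inner product (the left kernel is nontrivial and, as already remarked in the excerpt, $\mathcal N_{\hat\rho}$ does not automatically annihilate it); however, all we need is conjugate symmetry and non-negativity on the diagonal, both of which hold, and the Cauchy--Schwarz inequality cited after Definition~\ref{defn_cdc2015_inner_products} is available if needed. The resulting inequalities are therefore valid as stated, with equality modulo $\mathcal N_{\hat\rho}\cap\mathcal Y$, consistent with Proposition~\ref{prop_uniqueness}.
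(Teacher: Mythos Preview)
Your proposal is correct and follows essentially the same Pythagorean decomposition as the paper: write $\hat X-\hat Z=(\hat X-\hat Q)+(\hat Q-\hat Z)$, expand the pre--inner product, and kill the cross terms using the orthogonality $\langle \hat Z,\hat X-\hat Q\rangle_{\hat\rho}=0$ (respectively $\langle\langle \hat Z,\hat X-\hat Q^{+}\rangle\rangle_{\hat\rho}=0$). The paper's proof for Part~(1) re-derives that orthogonality inline from the defining conditions on $\mathbb{Q}^{\pm}_{\hat\rho}$ rather than citing it, and for Part~(2) simply says ``by a similar argument,'' so your explicit check of conjugate symmetry of $\langle\langle\cdot,\cdot\rangle\rangle_{\hat\rho}$ is a welcome addition.
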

	
\begin{proof}

	\begin{enumerate}
	
	\item 
	Let $\hat{Y} = \mathbb{Q}_{\hat{\rho}}  \left[ \hat{X} | \mathcal{Y} \right]$ 
	and 
	$\hat{Y} ^{\pm}= \mathbb{Q}_{\hat{\rho} } ^{\pm}  \left[ \hat{X} | \mathcal{Y} \right] $.  
	For every $\hat{Z} \in \mathcal{L}(\mathcal{H})$, 
	\begin{align*}
	&
	\langle 
	\hat{X} - \hat{Z} , 
	\hat{X} - \hat{Z} 
	\rangle _{\hat{\rho}}
	\\
	= &
	\langle 
	\hat{X} - \hat{Y} , 
	\hat{X} - \hat{Y} 
	\rangle _{\hat{\rho}}
	+
	\langle 
	\hat{Y} - \hat{Z} , 
	\hat{Y} - \hat{Z} 
	\rangle _{\hat{\rho}}
	\\
	& +
	\langle 
	  \hat{X} - \hat{Y}  , \hat{Y} - \hat{Z}  
	\rangle _{\hat{\rho}}
	+	
	\langle 
	 \hat{Y} - \hat{Z}  , \hat{X} - \hat{Y}  
	\rangle _{\hat{\rho}}
	.  
	\end{align*}
	Now we use the operator $\hat{R} := \hat{Y} - \hat{Z}$ for simplicity.  
	Since $\hat{Y} - \hat{Z} \in \mathcal{Y}$ and $\hat{Z}$ is arbitrary chosen,  
	$\hat{R}$ represents any element in $\mathcal{Y}$.  
	Then,from  Definitions \ref{def_least_square_approximation},  
	\begin{align*}
	\langle \hat{X} - \hat{Y} ,\hat{R} \rangle _{\hat{\rho}}
	=&
	\mathbb{P}_{\hat{\rho}} \left[ \hat{X} ^{\ast} \hat{R} \right] 
	- 
	\mathbb{P}_{\hat{\rho}} \left[ \hat{Y}^{\ast} \hat{R} \right]
	\\
	=&
	\langle \langle \hat{X} ,\hat{R} \rangle \rangle _{\hat{\rho}}
	- \mathbb{P}_{\hat{\rho}} \left[ (\hat{Y}^{+}) ^{\ast}\hat{R} \right] 
	\\ &
	+
	\frac{1}{2} \mathbb{P}_{\hat{\rho}} \left[ [ \hat{X}  ^{\ast} , \hat{R} ]_{-} \right] 
	+ \mathbb{P}_{\hat{\rho}} \left[ \hat{Y}^{-} \hat{R} \right] 
	\\
	=& 0
	.
	\end{align*}
	Similarly, $
	\langle \hat{R} , \hat{X} - \hat{Y}  \rangle _{\hat{\rho}}
	= 0$.  
	Finally, we obtain 
	\begin{align*}
	& \langle 
	\hat{X} - \hat{Z} , 
	\hat{X} - \hat{Z} 
	\rangle _{\hat{\rho}}
	\nonumber \\ 
	=&
	\langle 
	\hat{X} - \hat{Y} , 
	\hat{X} - \hat{Y} 
	\rangle _{\hat{\rho}}
	+
	\langle 
	\hat{Y} - \hat{Z} , 
	\hat{Y} - \hat{Z} 
	\rangle _{\hat{\rho}}
	\\
	\geq &
	\langle 
	\hat{X} - \hat{Y} , 
	\hat{X} - \hat{Y} 
	\rangle _{\hat{\rho}}
	.
	\end{align*}
	
	\item By a similar argument, our claim is proved.   

	\end{enumerate}

\end{proof}

	Our interest is whether richer information provides better estimation or not under two measures.  
	Proposition \ref{prop_least_mean_squareness} gives the following results.

	\begin{cor}
	
	Let $\mathcal{Y}_{1}$ and $\mathcal{Y}_{2}$  be commutative von Neumann subalgebras of $\mathcal{L}(\mathcal{H})$ and assume $\mathcal{Y}_{1} \subseteq \mathcal{Y}_{2}$.  
	Then, for any $\hat{X} = \hat{X}^{\ast} \in \mathcal{L}(\mathcal{H})$, 
	\begin{enumerate} 
	\item 
		\begin{align*}
		& \langle  \hat{X} - \mathbb{Q}_{\hat{\rho}}^{+} \left[ \hat{X} | \mathcal{Y}_{2} \right] 
		,
		\hat{X} - \mathbb{Q}_{\hat{\rho}}^{+} \left[ \hat{X} | \mathcal{Y}_{2} \right] 
		\rangle _{\hat{\rho}}
		\nonumber \\
		\leq &
		\langle  \hat{X} - \mathbb{Q}_{\hat{\rho}}^{+} \left[ \hat{X} | \mathcal{Y}_{1} \right] 
		,
		 \hat{X} - \mathbb{Q}_{\hat{\rho}}^{+} \left[ \hat{X} | \mathcal{Y}_{1} \right] 
		\rangle _{\hat{\rho}}
		.
		\end{align*}
	
	\item 
		\begin{align*}
		& \langle  \hat{X} - \mathbb{Q}_{\hat{\rho}}^{-} \left[ \hat{X} | \mathcal{Y}_{2} \right] 
		,
		\hat{X} - \mathbb{Q}_{\hat{\rho}}^{-} \left[ \hat{X} | \mathcal{Y}_{2} \right] 
		\rangle _{\hat{\rho}}
		\nonumber \\
		\leq &
		\langle  \hat{X} - \mathbb{Q}_{\hat{\rho}}^{-} \left[ \hat{X} | \mathcal{Y}_{1} \right] 
		,
		 \hat{X} - \mathbb{Q}_{\hat{\rho}}^{-} \left[ \hat{X} | \mathcal{Y}_{1} \right] 
		\rangle _{\hat{\rho}}
		.
		\end{align*}

	\item 
	\begin{align*}
	& \langle  \hat{X} - \mathbb{Q}_{\hat{\rho}} \left[ \hat{X} | \mathcal{Y}_{2} \right] 
	,
	\hat{X} - \mathbb{Q}_{\hat{\rho}} \left[ \hat{X} | \mathcal{Y}_{2} \right] 
	\rangle _{\hat{\rho}}
	\nonumber \\
	\leq &
	\langle  \hat{X} - \mathbb{Q}_{\hat{\rho}} \left[ \hat{X} | \mathcal{Y}_{1} \right] 
	,
	 \hat{X} - \mathbb{Q}_{\hat{\rho}} \left[ \hat{X} | \mathcal{Y}_{1} \right] 
	\rangle _{\hat{\rho}}
	.
	\end{align*}
		
	\end{enumerate}
	
	\end{cor}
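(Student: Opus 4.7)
The plan is to derive each of the three parts by combining the variational characterizations in Proposition \ref{prop_least_mean_squareness} with the fact that $\mathbb{Q}_{\hat{\rho}}^{\pm}[\hat{X}|\mathcal{Y}_{1}]\in\mathcal{Y}_{1}\subseteq\mathcal{Y}_{2}$ is a legitimate test element in the optimization over $\mathcal{Y}_{2}$. Part 3 is then immediate: substituting $\hat{Z}=\mathbb{Q}_{\hat{\rho}}[\hat{X}|\mathcal{Y}_{1}]$ into Proposition \ref{prop_least_mean_squareness} Part 1 applied with $\mathcal{Y}=\mathcal{Y}_{2}$ delivers the claimed inequality at once.

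Part 1 needs one short conversion step because Proposition \ref{prop_least_mean_squareness} Part 2 supplies the analogous inequality in the \emph{symmetric} pre--inner product $\langle\langle\cdot,\cdot\rangle\rangle_{\hat{\rho}}$ rather than $\langle\cdot,\cdot\rangle_{\hat{\rho}}$.  For $\hat{X}=\hat{X}^{\ast}$, the self--adjointness property recorded just before Proposition \ref{prop_uniqueness} gives $\mathbb{Q}_{\hat{\rho}}^{+}[\hat{X}|\mathcal{Y}_{i}]^{\ast}=\mathbb{Q}_{\hat{\rho}}^{+}[\hat{X}|\mathcal{Y}_{i}]$, $\mathbb{P}_{\hat{\rho}}$--a.s., so each $\hat{A}_{i}:=\hat{X}-\mathbb{Q}_{\hat{\rho}}^{+}[\hat{X}|\mathcal{Y}_{i}]$ is self--adjoint a.s.  For any such $\hat{A}$ one has $\hat{A}^{\ast}\hat{A}=\hat{A}\hat{A}^{\ast}$, hence $\langle\hat{A},\hat{A}\rangle_{\hat{\rho}}=\langle\langle\hat{A},\hat{A}\rangle\rangle_{\hat{\rho}}$; applying this identity on both sides of Proposition \ref{prop_least_mean_squareness} Part 2, taken with test element $\hat{Z}=\mathbb{Q}_{\hat{\rho}}^{+}[\hat{X}|\mathcal{Y}_{1}]$, finishes Part 1.

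Part 2 is where I expect the real work, because Proposition \ref{prop_least_mean_squareness} offers no variational characterization of the skew part $\mathbb{Q}_{\hat{\rho}}^{-}[\hat{X}|\mathcal{Y}]$.  My plan is a direct computation followed by a single Cauchy--Schwarz.  Writing $\hat{Q}_{i}:=\mathbb{Q}_{\hat{\rho}}^{-}[\hat{X}|\mathcal{Y}_{i}]$ (skew--adjoint a.s.) and expanding
\begin{align*}
(\hat{X}-\hat{Q}_{i})^{\ast}(\hat{X}-\hat{Q}_{i})=\hat{X}^{2}+[\hat{Q}_{i},\hat{X}]_{-}-\hat{Q}_{i}^{2},
\end{align*}
the cross term can be evaluated from the skew symmetric orthogonal condition \eqref{eq_commutation_approximation} with $\hat{Z}=\hat{Q}_{i}$, which yields $\mathbb{P}_{\hat{\rho}}[[\hat{Q}_{i},\hat{X}]_{-}]=2\mathbb{P}_{\hat{\rho}}[\hat{Q}_{i}^{2}]$ and therefore the clean identity
\begin{align*}
\langle\hat{X}-\hat{Q}_{i},\hat{X}-\hat{Q}_{i}\rangle_{\hat{\rho}}=\mathbb{P}_{\hat{\rho}}[\hat{X}^{2}]-\langle\hat{Q}_{i},\hat{Q}_{i}\rangle_{\hat{\rho}}.
\end{align*}
The claim therefore reduces to $\langle\hat{Q}_{1},\hat{Q}_{1}\rangle_{\hat{\rho}}\leq\langle\hat{Q}_{2},\hat{Q}_{2}\rangle_{\hat{\rho}}$.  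To establish this I would specialize \eqref{eq_commutation_approximation} for $\hat{Q}_{2}$ to $\hat{Z}\in\mathcal{Y}_{1}\subseteq\mathcal{Y}_{2}$, subtract the identity for $\hat{Q}_{1}$, and obtain $\mathbb{P}_{\hat{\rho}}[\hat{Q}_{1}\hat{Z}]=\mathbb{P}_{\hat{\rho}}[\hat{Q}_{2}\hat{Z}]$ for every $\hat{Z}\in\mathcal{Y}_{1}$.  Plugging $\hat{Z}=\hat{Q}_{1}$ and using skew--adjointness turns this into $\langle\hat{Q}_{1},\hat{Q}_{1}\rangle_{\hat{\rho}}=\langle\hat{Q}_{2},\hat{Q}_{1}\rangle_{\hat{\rho}}$, and the Cauchy--Schwarz inequality for $\langle\cdot,\cdot\rangle_{\hat{\rho}}$ then yields the required bound.
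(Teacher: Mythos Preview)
Your argument is correct in all three parts. Parts~1 and~3 are handled exactly as one would expect for a corollary of Proposition~\ref{prop_least_mean_squareness}: the estimate over the smaller algebra is a legitimate competitor in the optimization over the larger one, and for Part~1 your conversion between $\langle\cdot,\cdot\rangle_{\hat\rho}$ and $\langle\langle\cdot,\cdot\rangle\rangle_{\hat\rho}$ via the self--adjointness of $\hat X-\mathbb{Q}_{\hat\rho}^{+}[\hat X\mid\mathcal{Y}_i]$ is valid.

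For Part~2 your route differs from the paper's. The paper argues by a Pythagorean decomposition: writing $\hat X-\hat Q_1=(\hat X-\hat Q_2)+(\hat Q_2-\hat Q_1)$ and showing, using skew--adjointness together with the defining relation \eqref{eq_commutation_approximation} for $\hat Q_2$ applied to $\hat Z=\hat Q_2-\hat Q_1\in\mathcal{Y}_2$, that the cross terms
\[
\langle\hat X-\hat Q_2,\hat Q_2-\hat Q_1\rangle_{\hat\rho}+\langle\hat Q_2-\hat Q_1,\hat X-\hat Q_2\rangle_{\hat\rho}
\]
vanish. This yields the exact identity
\[
\langle\hat X-\hat Q_1,\hat X-\hat Q_1\rangle_{\hat\rho}=\langle\hat X-\hat Q_2,\hat X-\hat Q_2\rangle_{\hat\rho}+\langle\hat Q_2-\hat Q_1,\hat Q_2-\hat Q_1\rangle_{\hat\rho},
\]
from which the inequality is immediate. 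Your approach instead derives the closed form $\langle\hat X-\hat Q_i,\hat X-\hat Q_i\rangle_{\hat\rho}=\mathbb{P}_{\hat\rho}[\hat X^2]-\langle\hat Q_i,\hat Q_i\rangle_{\hat\rho}$, reduces the claim to $\langle\hat Q_1,\hat Q_1\rangle_{\hat\rho}\le\langle\hat Q_2,\hat Q_2\rangle_{\hat\rho}$, and finishes with Cauchy--Schwarz after establishing $\langle\hat Q_1,\hat Q_1\rangle_{\hat\rho}=\langle\hat Q_2,\hat Q_1\rangle_{\hat\rho}$. Both are clean; the paper's route gives a sharper equality (the gap is exactly $\langle\hat Q_2-\hat Q_1,\hat Q_2-\hat Q_1\rangle_{\hat\rho}$) and needs no Cauchy--Schwarz, while yours makes explicit that the error reduction is equivalent to monotone growth of $\|\hat Q_i\|_{\hat\rho}$ as the information algebra enlarges.
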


	\begin{proof}
	We prove only 2).  
	Let $\hat{Y}_{i} :=  \mathbb{Q}_{\hat{\rho}}^{-} \left[ \hat{X} | \mathcal{Y}_{i} \right]$, $i=3,4$.  
		As $\hat{Y}_{3} \in \mathcal{Y}_{2}$, straightforward calculation gives 
		\begin{align*}
		&\langle  \hat{X} - \hat{Y}_{3},  \hat{X} - \hat{Y}_{3}	\rangle _{\hat{\rho}}
		\nonumber \\
		=&
		\langle  \hat{X} - \hat{Y}_{4} + \hat{Y}_{4} - \hat{Y}_{3},  
		\hat{X} - \hat{Y}_{4} + \hat{Y}_{4} - \hat{Y}_{3}	\rangle _{\hat{\rho}} 
		\nonumber \\
		=&
		\langle  \hat{X} - \hat{Y}_{4},  \hat{X} - \hat{Y}_{4}	\rangle _{\hat{\rho}}
		+
		\langle  \hat{Y}_{4} - \hat{Y}_{3},  \hat{Y}_{4} - \hat{Y}_{3}	\rangle _{\hat{\rho}}
		\nonumber \\ & \quad
		+
		\langle   \hat{X} - \hat{Y}_{4}  ,   \hat{Y}_{4} - \hat{Y}_{3}  \rangle _{\hat{\rho}}
		+
		\langle   \hat{Y}_{4} - \hat{Y}_{3}  ,   \hat{X} - \hat{Y}_{4}  \rangle _{\hat{\rho}}
			.
		\end{align*}
		Since $\hat{Y}_{i}^{\ast} = - \hat{Y}_{i}$, $i=3,4$, 
		\begin{align*}
		& \langle   \hat{X} - \hat{Y}_{4}  ,   \hat{Y}_{4} - \hat{Y}_{3}  \rangle _{\hat{\rho}}
		+
		\langle   \hat{Y}_{4} - \hat{Y}_{3}  ,   \hat{X} - \hat{Y}_{4}  \rangle _{\hat{\rho}}
		\nonumber \\
		=&
		\mathbb{P}_{\hat{\rho}}\left[
		\hat{X} \left( \hat{Y}_{4} - \hat{Y}_{3} \right) - \left( \hat{Y}_{4} - \hat{Y}_{3} \right) \hat{X} 
		 \right] 
		 \nonumber \\ & \quad
		 + 
		2
		 \mathbb{P}_{\hat{\rho}} \left[
		 \hat{Y}_{4} \left( \hat{Y}_{4} - \hat{Y}_{3} \right)
		 \right]
		 =0.  
		\end{align*}
		Therefore, 
		\begin{align*}
		&\langle  \hat{X} - \hat{Y}_{3},  \hat{X} - \hat{Y}_{3}	\rangle _{\hat{\rho}}
		\nonumber \\
		=&
		\langle  \hat{X} - \hat{Y}_{4},  \hat{X} - \hat{Y}_{4}	\rangle _{\hat{\rho}}
		+
		\langle  \hat{Y}_{4} - \hat{Y}_{3},  \hat{Y}_{4} - \hat{Y}_{3}	\rangle _{\hat{\rho}}
		\nonumber \\
		\geq &
		\langle  \hat{X} - \hat{Y}_{4},  \hat{X} - \hat{Y}_{4}	\rangle _{\hat{\rho}}
		.  
		\end{align*}

	\end{proof}

\section{MODEL AND QUANTUM FILTERING}

	\subsection{Model}
	
	Any quantum system is described by suitable Hilbert space and linear operators on the Hilbert space.  
	We consider two quantum systems, system to be estimated and probe system.  
	We describe their Hilbert spaces $\mathcal{H}_{S}$ and $\mathcal{H}_{P}$, respectively.  
	$\mathcal{H}_{P}$ is a continuous Fock space \cite{guichardet1969tensor}; $\mathcal{H}_{P} = \otimes _{t \in [0, \infty) } \mathcal{H}_{P}(t)$, where $\mathcal{H}_{P} (t)$ is a Hilbert space at time $t \geq 0$.  
	The compound quantum system is the tensor product Hilbert space $\mathcal{H} = \mathcal{H}_{S} \otimes \mathcal{H}_{P}$  equipped with a density operator $	\hat{\rho} = \hat{\rho}_{S} \otimes \hat{\rho}_{P},\quad 
	\hat{\rho}_{S}\in {\mathcal{S}}({\mathcal{H}}_{S}),\ 
	\hat{\rho}_{P}\in {\mathcal{S}}({\mathcal{H}}_{P})$.  
	For simplicity, we assume $\hat{\rho}_{P}$ is a vacuum state \cite{Gardiner_quantum_noise,walls2008quantum}.  
	Physical quantities of the system are described by self-adjoint operators in $\mathcal{L}(\mathcal{H}_{S})$ and 
	physical quantities of the probe system are described by self-adjoint operators in $\mathcal{L}(\mathcal{H}_{P})$.  
	They act on the total quantum system with corresponding identity operator, though, we omit identity operator for simplicity; 
	$\hat{X} \otimes \hat{1}_{P} \equiv \hat{X}$ and $\hat{1}_{S} \otimes \hat{Y} \equiv \hat{Y}$ for $\hat{X} \in \mathcal{L}(\mathcal{H}_{S})$ and $\hat{Y} \in \mathcal{L}(\mathcal{H}_{P})$.

	\begin{figure}[!htbp]
	\begin{center}
	\includegraphics[width=8.3cm]{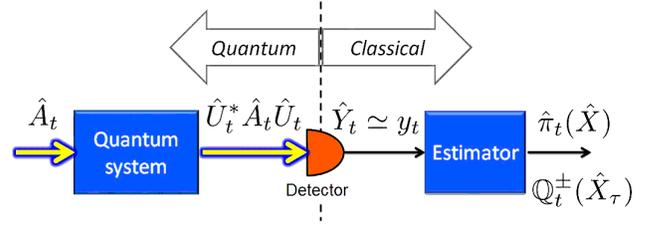}
	\caption{Schematic diagram}
	\label{fig_model_setup}
	\end{center}
	\end{figure}

	According to quantum theory, the time evolution of every physical quantity $\hat{X}=\hat{X}^{\ast} \in \mathcal{L}(\mathcal{H})$ driven by probe system is determined by a unitary operator $\hat{U}_{t}$ which denotes the interaction between the system and the probe.  
	We consider the unitary operator $\hat{U}_{t}$ as the solution of the following {\it Hudson--Parthasarathy equation}; 
	\begin{align}
	d\hat{U}_{t} =
	\left( 
	- 
	\mathrm{i} \hat{H}dt 
	- 
	\frac{1}{2}\hat{L}^{\ast}\hat{L} dt 
	+ 
	\hat{L}d\hat{A}_{t} ^{\ast} 
	- 
	\hat{L}^{\ast} 
	d\hat{A}_{t} 
	\right)
	\hat{U}_{t}
	\label{eq_hudson_parthasarathy_equation}
	\end{align}
	with $\hat{U}_{0} = \hat{1}$, 
	where $\hat{H}, \hat{L} \in \mathcal{L}(\mathcal{H}_{S})$ and 
	$\hat{A}_{t} \in \mathcal{L}(\mathcal{H}_{P})$ is a quantum anihilation process which satisfies quantum Ito's rule \cite{Hudson_Parthasarathy};  
	\begin{align}
	\left\{
	\begin{array}{l}
	d\hat{A}_{t} d\hat{A}_{t} = d\hat{A}_{t}^{\ast} d\hat{A}_{t} = d\hat{A}_{t} dt = (dt)^{2} = 0
	,\\
	d\hat{A}_{t}  d\hat{A}_{t}^{\ast} = dt
	\end{array} \right.
	.
	\label{eq_quantum_ito_rule}
	\end{align}	
	Then the time evolution of the $\hat{X}_{t} = \hat{U}_{t}^{\ast} \hat{X} \hat{U}_{t}$ 
	is given by following quantum stochastic differential equation
	\begin{align}
	d\hat{X}_{t} 
	=&
	\mathrm{i} [\hat{H}_{t} , \hat{X}_{t}]_{-} dt
	\nonumber \\ & \ 
	+
	\frac{1}{2} \left(
	\hat{L}^{\ast}_{t} [\hat{X}_{t} , \hat{L}_{t}]_{-} 
	+
	[\hat{L}^{\ast}_{t} , \hat{X}_{t} ]_{-} \hat{L}_{t} 
	\right) dt
	\nonumber \\ 
	&\quad  +
	[\hat{L}_{t}^{\ast} , \hat{X}_{t}]_{-} d\hat{A}_{t}
	+
	[\hat{X}_{t} , \hat{L}_{t}]_{-} d\hat{A}_{t}^{\ast}
	\label{eq_physical_quantity_evolution}
	\end{align}
	where $\hat{H}_{t} = \hat{U}_{t}^{\ast} \hat{H} \hat{U}_{t}$ and 	$\hat{L}_{t} = \hat{U}_{t}^{\ast} \hat{L} \hat{U}_{t}$.  
	
	We consider the homodyne detection as a detection of the probe system \cite{Gardiner_quantum_noise,walls2008quantum}.  
	The measurement outcome is represented by 
	$\hat{Y}_{t} := \hat{U}_{t}^{\ast} ( \hat{A}_{t} + \hat{A}_{t}^{\ast} ) \hat{U}_{t}$ and its increment is  
	\begin{align}
	d\hat{Y}_{t} 
	=&
	\left(
	\hat{L}_{t} + \hat{L}_{t}^{\ast}
	\right) dt
	+
	 d\hat{A}_{t}
	+
	 d\hat{A}_{t}^{\ast}
	 .
	 \label{eq_observed_signal}
	\end{align}
	
	From the definitions of the unitary operator and the observed process, following equations hold;   
	\begin{align}
	\hat{X}_{t} \hat{Y}_{s} =& \hat{Y}_{s} \hat{X}_{t} , \quad \forall t \geq s \geq 0,
	\\
	\hat{Y}_{t} \hat{Y}_{s} =& \hat{Y}_{s} \hat{Y}_{t} , \quad \forall t , s \geq 0
	\label{eq_classical_process_condition}
	. 
	\end{align}
	We denote the von Neumann subalgebra generated by $\{ \hat{Y}_{s} \} _{s=0}^{t}$ by $\mathcal{Y}_{t}$, which corresponds to the $\sigma$--field generated by measurement record up to time $t$.  
	Clearly, $\mathcal{Y}_{t}$ is a commutative von Neumann subalgebra and 
	$\hat{X}_{t} \in \mathcal{Y}_{t}^{\prime}$ for $t\geq 0$.

	\subsection{Quantum filter}
	
	Let $\hat{\pi}_{t} (\hat{X} ) := \mathbb{P}_{\hat{\rho}} [ \hat{X}_{t} \ | \ \mathcal{Y}_{t} ] $ be a quantum conditional expectation up to $t\geq 0$.  
	The quantum filtering equation is given by following equation \cite{Bouten_quant_filtering}; 
	\begin{align}
	d\hat{\pi}_{t}(\hat{X}) 
	=&
	\hat{\pi}_{t} \Big{(}
	\mathrm{i} [\hat{H} , \hat{X} ]_{-}
	\Big{)}
	 dt
	\nonumber \\
	& \ 
	+
	\frac{1}{2} \hat{\pi}_{t} 
	\Big{(}
	\hat{L}^{\ast} [\hat{X} , \hat{L}]_{-} 
	+
	[\hat{L}^{\ast} , \hat{X} ] _{-} \hat{L} 
	\Big{)} dt
	\nonumber \\
	&
	 \quad +
	\hat{\pi}_{t}\Big{(}
	(\hat{L} - \hat{\pi}_{t}(\hat{L}) )^{\ast} \hat{X} 
	+ 
	\hat{X} (\hat{L} - \hat{\pi}_{t}(\hat{L}) )
	\Big{)}
	\nonumber \\
	&\quad \quad \times 
	( d\hat{Y}_{t}  - \hat{\pi}_{t}(\hat{L} + \hat{L}^{\ast})dt )
	.
	\label{eq_quantum_filtering}
	\end{align}
	Remember that $\mathcal{Y}_{t}$ is identified to a set of classical random variables of the classical probability space $(\Omega , \mathcal{F} , {\rm P})$, 
	there exists $\hat{\rho}_{t}(\omega )  \in \mathcal{S}(\mathcal{H}_{S})$ for all $\omega \in \Omega$  satisfies 
	\begin{align*}
	\hat{\pi}_{t}(\hat{X} ) (\omega ) = \mathrm{Tr}[ \hat{\rho}_{t}(\omega ) \hat{X} ],\quad \forall \hat{X} \in \mathcal{L}(\mathcal{H}_{S}), \ \forall \omega \in \Omega
	. 
	\end{align*}
	By the cyclic property of the trace, the stochastic differential equation of $\hat{\rho}_{t}$, so-called 
	the stochastic master equation or quantum trajectory equation, is given by 
	\begin{align}
	d\hat{\rho} _{t}
	=&
	-\mathrm{i} \left[ \hat{H},  \hat{\rho}_{t} \right] _{-} dt
	\nonumber \\ 
	& \quad  
	+
	\left(
	\hat{L} \hat{\rho}_{t} \hat{L}^{\ast}
	-
	\frac{1}{2} \hat{L}^{\ast} \hat{L} \hat{\rho} _{t}
	-
	\frac{1}{2} \hat{\rho} _{t} \hat{L}^{\ast} \hat{L} 
	\right)
	dt
	\nonumber \\
	&
	+
	\left(
	\hat{L} \hat{\rho}_{t} + \hat{\rho}_{t} \hat{L}^{\ast}
	-
	\mathrm{Tr} \left[ (\hat{L} + \hat{L}^{\ast}) \hat{\rho}_{t} \right] \hat{\rho}_{t}
	\right)
	\nonumber \\
	&\quad \quad \times 
	\left( dy_{t} - \mathrm{Tr} \left[ (\hat{L} + \hat{L}^{\ast})\hat{\rho}_{t} \right] dt \right)
	.
	\label{eq_stochastic_master_equation}
	\end{align}

\section{MAIN RESULT: QUAMTUM SMOOTHING}
	
	\subsection{The proposal quantum smoother}
	
	We deal with a fixed point smoothing problem.  
	One of the simplest quantum smoothing setup is the target quantum physical quantity does not evolve under the unitary operator.  
	That implies $[\hat{U}_{t} , \hat{X}]_{-}=0$ for all $t\geq 0$ and this is called {\it the Braginsky's quantum nondemolition detection condition} \cite{Braginsky_QND}.  
	In this paper, we consider more general setup.  
	Let us derive the recursive expression of the quantum least mean square estimation.  
	As we mentioned in Proposition \ref{prop_least_mean_squareness}, 
	there are two measures to describe least mean square errors.  
	However, the quantum least mean square estimation is composed of the symmetric and skew parts of the quantum least mean square estimate.  
	Therefore we focus on deriving the recursive equations of the symmetric quantum least mean square estimation and the mean non-commutativity.  
	
	Consider to estimate $\hat{X}_{\tau}$, which is the solution of Eq. \eqref{eq_physical_quantity_evolution} at a fixed $\tau \geq 0$, from measurement records ${\mathcal{Y}}_{t}$ up to $t \geq \tau$.  
	Remember that any element of ${\mathcal{Y}}_{t}$ is regarded as a classical random variable, 
	we can apply the martingale method \cite{krishnan1984nonlinear} for deriving the dynamical estimator.  
	We give a sketch to derive the dynamical estimator.  
	We denote $\mathbb{Q}_{t}^{\pm} \left( \hat{X} \right)  := \mathbb{Q}_{\hat{\rho}}^{\pm} \left[ \hat{X} \ | \ \mathcal{Y}_{t} \right]$ for $\hat{X} \in \mathcal{L}( \mathcal{H} )$.  
	Since the physical quantity does not evolve after $\tau $ and ${\mathbb{Q}}^{\pm}_{t} \left( \hat{X}_{\tau} \right) \in {\mathcal{Y}}_{t}$ for all $t\geq \tau$, 
	the process $\{ {\mathbb{Q}}_{t}^{\pm} \left( \hat{X}_{\tau} \right) \} _{t\geq \tau }$ is martingale.  
	Any increment of any martingale process can be represented by multiplication between the increment of the innovation and a uniquely determined coefficient derived from measurement records ({\it the Fujisaki--Kallianpur--Kunita's theorem}).  
	\begin{align}
	d {\mathbb{Q}}_{t}^{\pm} \left( \hat{X}_{\tau} \right) 
	= &
	{\mathbb{Q}}_{t+dt}^{\pm} \left( \hat{X}_{\tau} \right) - {\mathbb{Q}}_{t}^{\pm}\left ( \hat{X}_{\tau} \right)
	\nonumber \\
	=&
	\hat{\Gamma} _{t}^{\pm} \left( d\hat{Y}_{t} - \hat{\pi}_{t} \left( \hat{L} + \hat{L}^{\ast} \right) dt\right)
	\label{eq_quantum_fujisaki_kallianpur_kunita}
	\end{align}
	Note that $\mathbb{Q}_{\tau}^{+}\left( \hat{X}_{\tau} \right) = \hat{\pi}_{\tau} \left( \hat{X} \right) $ and 
	$\mathbb{Q}_{\tau}^{-} \left( \hat{X}_{\tau} \right) = 0 $.  
	Then it remains to derive the coefficient $\hat{\Gamma}_{t}^{\pm} \in {\mathcal{Y}}_{t}$.

	\begin{thm}
	\label{thm_quantum_smoothing}
	
	Let $\mathbb{Q}_{\tau}^{+} \left( \hat{X}_{\tau} \right) = \hat{\pi}_{\tau} \left( \hat{X} \right)$ 
	and $\mathbb{Q}_{\tau}^{-} \left( \hat{X}_{\tau} \right) = 0 $.  
	Then the recursive estimators are represented by following equations;  
	\begin{align}
	d\mathbb{Q}_{t}^{\pm} \left( \hat{X}_{\tau} \right) =& 
	\frac{1}{2}
		\Big{\{}
	\mathbb{Q} _{t} ^{+} 
	\left( \left[ \left( \hat{L}_{t} + \hat{L}_{t}^{\ast} \right) , \hat{X}_{\tau} \right] _{\pm} \right) 
	\nonumber \\ & \quad
	+
	\mathbb{Q} _{t} ^{-} \left(
	 \left[ 
	  \left( \hat{L}_{t} + \hat{L}_{t}^{\ast} \right) , \hat{X}_{\tau}  
	 \right] _{\mp} \right)
	\nonumber \\ & \quad \quad 
	-2
	{\mathbb{Q}}_{t}^{\pm} \left( \hat{X}_{\tau} \right)  \hat{\pi}_{t} \left( \hat{L} + \hat{L}^{\ast} \right)
	\Big{\}}
	\nonumber  \\ & \quad \times 
	  \left( d \hat{Y}_{t} - \hat{\pi}_{t} \left( \hat{L} + \hat{L}^{\ast} \right) dt \right)
	  , \quad \forall t\geq \tau 
	  \label{eq_quantum_smoothing_equation}
	\end{align}

	\end{thm}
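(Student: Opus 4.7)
My plan is to identify the Fujisaki--Kallianpur--Kunita coefficients $\hat{\Gamma}_t^\pm$ in \eqref{eq_quantum_fujisaki_kallianpur_kunita} directly from the defining orthogonality conditions \eqref{eq_least_square_approximate} and \eqref{eq_commutation_approximation}. The initial values at $t = \tau$ are almost immediate: since $\hat{X}_\tau = \hat{U}_\tau^\ast \hat{X}\hat{U}_\tau \in \mathcal{Y}_\tau'$, every element of $\mathcal{Y}_\tau$ commutes with $\hat{X}_\tau$, so $\langle\langle \cdot, \hat{X}_\tau\rangle\rangle_{\hat{\rho}}$ agrees with $\langle \cdot, \hat{X}_\tau\rangle_{\hat{\rho}}$ on $\mathcal{Y}_\tau$; the symmetric orthogonality \eqref{eq_least_square_approximate} therefore collapses to \eqref{eq_quantum_conditional_expectation}, yielding $\mathbb{Q}_\tau^+(\hat{X}_\tau) = \hat{\pi}_\tau(\hat{X})$, and the skew identity \eqref{eq_commutation_approximation} is solved by $0$.

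For the increment at $t > \tau$, I would substitute the ansatz $\mathbb{Q}_{t+dt}^\pm(\hat{X}_\tau) = \mathbb{Q}_t^\pm(\hat{X}_\tau) + \hat{\Gamma}_t^\pm dI_t$ into the corresponding orthogonality condition at time $t + dt$ tested against $\hat{Z} = \hat{Z}_t + \alpha_t\, dI_t \in \mathcal{Y}_{t+dt}$, with arbitrary $\hat{Z}_t, \alpha_t \in \mathcal{Y}_{t^-}$, and subtract the orthogonality already valid at time $t$. The $\hat{Z}_t$ contribution vanishes because $\mathbb{P}_{\hat{\rho}}[f\, dI_t] = 0$ for every $f \in \mathcal{Y}_{t^-}$, which in turn follows from the vacuum annihilation $d\hat{A}_t\,\hat{\rho} = 0$ together with the definition of $\hat{\pi}_t$. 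Using $dI_t\, dI_t = dt$ and the commutativity $[\alpha_t, \hat{\Gamma}_t^\pm]_- = 0$ inside $\mathcal{Y}_t$, the second-order Ito term produces $-\mathbb{P}_{\hat{\rho}}[\alpha_t \hat{\Gamma}_t^\pm]\, dt$. Matching the coefficient of the arbitrary $\alpha_t$ at $dt$-order then identifies $\hat{\Gamma}_t^\pm$ modulo $\mathcal{N}_{\hat{\rho}}$, with $\mathbb{P}_{\hat{\rho}}$--a.s.\ uniqueness inherited from Proposition \ref{prop_uniqueness}.

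The main obstacle is reducing the remaining cross term $\tfrac{1}{2}\mathbb{P}_{\hat{\rho}}[\alpha_t\, dI_t\,\hat{X}_\tau \pm \hat{X}_\tau\,\alpha_t\, dI_t]$ to an explicit $dt$-order expression. The crucial observation is that $\hat{X}_\tau$ commutes with $d\hat{A}_t$ and $d\hat{A}_t^\ast$ for every $t > \tau$, because these increments act on disjoint slices of the continuous Fock space $\mathcal{H}_P$; this kills the noise part of $dI_t$ against the vacuum in the cross term and leaves only the drift $(\hat{L}_t + \hat{L}_t^\ast - \hat{\pi}_t(\hat{L}+\hat{L}^\ast))\, dt$. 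Decomposing the surviving product $\alpha_t(\hat{L}_t+\hat{L}_t^\ast)\hat{X}_\tau \pm \hat{X}_\tau\alpha_t(\hat{L}_t+\hat{L}_t^\ast)$ into $\alpha_t\,[\hat{L}_t+\hat{L}_t^\ast,\hat{X}_\tau]_\pm$ plus residual commutator terms, and then invoking \eqref{eq_least_square_approximate} and \eqref{eq_commutation_approximation} to replace the unconditional averages against these brackets by the corresponding $\mathbb{Q}_t^+$ and $\mathbb{Q}_t^-$, reproduces the bracketed expression in \eqref{eq_quantum_smoothing_equation}. The $\hat{\pi}_t(\hat{L}+\hat{L}^\ast)$-drift piece of $dI_t$ furnishes the $-2\mathbb{Q}_t^\pm(\hat{X}_\tau)\hat{\pi}_t(\hat{L}+\hat{L}^\ast)$ term through the symmetric LMS identity applied at time $t$, and the $\pm/\mp$ pattern swapping commutator for anticommutator is forced by the parity property (self-adjointness/skewness) of $\mathbb{Q}_t^\pm$ discussed after Definition \ref{def_least_square_approximation}. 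The skew case is handled in exact parallel with \eqref{eq_commutation_approximation} replacing \eqref{eq_least_square_approximate}.
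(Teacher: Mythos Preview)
Your proposal is correct and follows essentially the same martingale (Fujisaki--Kallianpur--Kunita) method the paper uses: impose the defining orthogonality conditions \eqref{eq_least_square_approximate}, \eqref{eq_commutation_approximation} at the infinitesimal level, apply the quantum It\^{o} rule \eqref{eq_quantum_ito_rule}, and read off $\hat{\Gamma}_t^\pm$. The only cosmetic difference is how the ``new'' direction in $\mathcal{Y}_{t+dt}$ is generated---you parametrize the test element as $\hat{Z}_t+\alpha_t\,dI_t$, whereas the paper (following Bouten--van Handel--James) multiplies the orthogonality identity by $\hat{Y}_t$ and differentiates, so that the increment $d\hat{Y}_t\hat{Z}$ plays the role of your $\alpha_t\,dI_t$; the resulting algebraic decomposition of $\hat{Z}(\hat{L}_t+\hat{L}_t^\ast)\hat{X}_\tau \pm \hat{X}_\tau(\hat{L}_t+\hat{L}_t^\ast)\hat{Z}$ into the $[\,\cdot\,,\,\cdot\,]_{\pm}$ pieces and the identification via $\mathbb{Q}_t^{+}$, $\mathbb{Q}_t^{-}$ is identical in both.
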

	
	From Proposition \ref{prop_least_mean_squareness}, the sum of solutions $\mathbb{Q}_{t} \left( \hat{X}_{\tau} \right) = \mathbb{Q}_{t}^{+} \left( \hat{X}_{\tau} \right)  + \mathbb{Q}_{t}^{-} \left( \hat{X}_{\tau} \right) $ is the optimal estimate of $\hat{X}_{\tau}$ in the sense of semi-norm defined by $\langle \bullet , \bullet \rangle _{\hat{\rho}}$.  
	We call $d \mathbb{Q}_{t} \left( \hat{X}_{\tau} \right) $ {\it the quantum smoother}.  
	The symmetric part of the quantum smoother is also the optimal estimate of $\hat{X}_{\tau}$ in the sense of semi-norm defined by $\langle \langle \bullet , \bullet \rangle \rangle _{\hat{\rho}}$.  
	We call $d\mathbb{Q}_{t}^{+}(\hat{X}_{\tau})$ {\it the symmetric quantum smoother}.  
	We obtain two different optimal smoothers.  
	
	\begin{proof}
	To derive $\hat{\Gamma}_{t}^{\pm}$, we refer a method in \cite[Sec. 5]{Bouten_discrete_quant_filtering}.  
	From Definition \ref{def_least_square_approximation}, 
	${\mathbb{P}}_{\hat{\rho}} \left[  \hat{Z} \hat{Y}_{t} \hat{X}_{\tau} \pm \hat{X}_{\tau} \hat{Y}_{t} \hat{Z}   \right] 
	= 
	2 {\mathbb{P}}_{\hat{\rho}} \left[ {\mathbb{Q}}_{t}^{\pm} \left( \hat{X}_{\tau} \right) \hat{Y}_{t} \hat{Z} \right]$
	holds for every $t\geq \tau$ and any $\hat{Z} \in {\mathcal{Y}}_{t}$ because $[\hat{Y}_{t}, \hat{Z} ]_{-} =0 $ and $\hat{Y}_{t}\hat{Z} \in \mathcal{Y}_{t}$.  
	This implies that 
	\begin{align}
	 {\mathbb{P}}_{\hat{\rho}} 
	 \left[ 
	 d \left( \hat{Z} \hat{Y}_{t} \hat{X}_{\tau} \pm \hat{X}_{\tau} \hat{Y}_{t} \hat{Z} \right) 
	 \right] 
	=&  
	2 {\mathbb{P}}_{\hat{\rho}} \left[ 
	d \left( {\mathbb{Q}}^{\pm}_{t} (\hat{X}_{\tau})  \hat{Y}_{t} \right)  \hat{Z} 
	\right]
	,
	\label{eq_siceannual2014_derivation_1}
	\end{align}
	$\forall \hat{Z} \in \mathcal{Y}_{t}$ holds.  
	Thanks to the quantum Ito's formula \cite{Hudson_Parthasarathy}, 
	the right-hand side of Eq. \eqref{eq_siceannual2014_derivation_1} is 
	\begin{align*}
	& {\mathbb{P}}_{\hat{\rho}} \left[ 
	d \left( {\mathbb{Q}}_{t}^{\pm} \left( \hat{X}_{\tau} \right)  \hat{Y}_{t} \right)  \hat{Z} \right]
	\\
	=&
	{\mathbb{P}}_{\hat{\rho}} \left[ 
	{\mathbb{Q}}_{t}^{\pm} \left( \hat{X}_{\tau} \right)  \left( \hat{L} _{t} + \hat{L}_{t}^{\ast} \right)
	\hat{Z}
	\right] 
	dt
	+
	{\mathbb{P}}_{\hat{\rho}}
	\left[ 
	\hat{\Gamma}_{t}^{\pm}
	\hat{Z}
	\right] 
	dt
	\\
	=&
	{\mathbb{P}}_{\hat{\rho}} \left[ 
	{\mathbb{Q}}_{t}^{\pm} (\hat{X}_{\tau})  \hat{\pi}_{t} \left( \hat{L} + \hat{L}^{\ast} \right)
	\hat{Z}
	\right] 
	dt
	+
	{\mathbb{P}}_{\hat{\rho}}
	\left[ 
	\hat{\Gamma}_{t} ^{\pm}
	\hat{Z}
	\right] 
	dt
	.
	\end{align*}
	Then the left-hand side of Eq. \eqref{eq_siceannual2014_derivation_1} is 
	\begin{align*}
	& {\mathbb{P}}_{\hat{\rho}} \left[ 
	\hat{Z} d \hat{Y}_{t} \hat{X}_{\tau}  \pm \hat{X}_{\tau} d\hat{Y}_{t} \hat{Z} 
	\right] 
	\\
	=&
	{\mathbb{P}}_{\hat{\rho}} \left[ 
	\hat{Z} \left( \hat{L}_{t} + \hat{L}_{t}^{\ast} \right) \hat{X}_{\tau}   
      \pm 
	\hat{X}_{\tau} \left( \hat{L}_{t} + \hat{L}_{t}^{\ast} \right) \hat{Z} 
	\right] 
	dt.  
	\end{align*}
	Note that the following decomposition gives us the estimations.  
	\begin{align*}
	&{\mathbb{P}}_{\hat{\rho}} \left[ 
	\hat{Z} \left( \hat{L}_{t} + \hat{L}_{t}^{\ast} \right) \hat{X}_{\tau}   
      \pm 
	\hat{X}_{\tau} \left( \hat{L}_{t} + \hat{L}_{t}^{\ast} \right) \hat{Z} 
	\right] 
	\nonumber \\
	=&
	\frac{1}{2}
	{\mathbb{P}}_{\hat{\rho}} \left[ 
	\hat{Z} 
	\left[ \left( \hat{L}_{t} + \hat{L}_{t}^{\ast} \right) , \hat{X}_{\tau} \right]_{\pm} 
	+ 
	\left[  \left( \hat{L}_{t} + \hat{L}_{t}^{\ast} \right) , \hat{X}_{\tau} \right]_{\pm}
	\hat{Z} 
      \right] 
	\nonumber \\
	& 
	+ 
	\frac{1}{2}
	{\mathbb{P}}_{\hat{\rho}} \left[
	\hat{Z} 
	\left[ \left( \hat{L}_{t} + \hat{L}_{t}^{\ast} \right) , \hat{X}_{\tau} \right] _{\mp} 
	- 
	\left[  \left( \hat{L}_{t} + \hat{L}_{t}^{\ast} \right) , \hat{X}_{\tau} \right]_{\mp}
      \hat{Z} 
      \right] 
      \\
      =& 
      {\mathbb{P}}_{\hat{\rho}} \left[ 
	\mathbb{Q} _{t} ^{+} \left( \left[ \hat{X}_{\tau} , \left( \hat{L}_{t} + \hat{L}_{t}^{\ast} \right) \right] _{\pm} \right) 
	\hat{Z} 
      \right] 
      \\ & \quad 
      +
      {\mathbb{P}}_{\hat{\rho}} \left[ 
	\mathbb{Q} _{t} ^{-} \left( \left[ \hat{X}_{\tau} , \left( \hat{L}_{t} + \hat{L}_{t}^{\ast} \right) \right] _{\mp} \right) 
	\hat{Z} 
      \right] 
	.  
	\end{align*}
	Then we obtain 
	\begin{align*}
	\hat{\Gamma} _{t}^{\pm} = & 
	\frac{1}{2}\mathbb{Q} _{t} ^{+} 
	\left( \left[  \left( \hat{L}_{t} + \hat{L}_{t}^{\ast} \right) \hat{X}_{\tau} \right] _{\pm} \right) 
	\\ & \quad 
	-
	{\mathbb{Q}}_{t}^{\pm} (\hat{X}_{\tau})  \hat{\pi}_{t} \left( \hat{L} + \hat{L}^{\ast} \right)
	\\ & \quad \quad 
	+
	\frac{1}{2}\mathbb{Q} _{t} ^{-} 
	\left( \left[ \left( \hat{L}_{t} + \hat{L}_{t}^{\ast} \right) , \hat{X}_{\tau} \right] _{\mp} \right) 
	,
	\end{align*}
	and this gives Eq. \eqref{eq_quantum_smoothing_equation}.  
	\end{proof}

	From \eqref{eq_quantum_smoothing_equation}, it is necessary to compute the skew part even if we want to compute the symmetric quantum smoother.  
	To compute the \eqref{eq_quantum_smoothing_equation}, we have to know 
	$\mathbb{Q} _{t} ^{\pm} \left( 
	\left[  \left( \hat{L}_{t} + \hat{L}_{t}^{\ast} \right) , \hat{X}_{\tau} \right]_{\pm} \right) $. 
	The time evolution equations of these operators can be derived in similar way, so we omit the derivation.  
	Note that if we obtain the equations of these operators, it is necessary to compute other equations to compute estimates of other operators in general.

	\subsection{Density operator description}
			
	Unlike the quantum filtering theory, it is difficult to find equations corresponding to the stochastic master equation.  
	If we consider the Bragynski's quantum nondemolition detection condition \cite{Braginsky_QND}, 
	the smoothing equations can be represented by the equations corresponding to stochastic master equation.  
	Suppose that the operator $\hat{L} \in \mathcal{L}(\mathcal{H}_{S})$ is normal 
	and satisfies $[\hat{L} , \hat{U}_{t} ] _{-} = 0 $, for all $t\geq 0$.   
	In this case, $\hat{\pi}_{t}(\hat{L} ) = \mathbb{Q}_{t}^{+}(\hat{L})$ and there exist 
	$\hat{\rho}_{0|t}^{+} (\omega )$ and $\hat{\rho}_{0|t}^{-} (\omega )$ satisfies 
	$\mathbb{Q}_{t}^{\pm} \left( \hat{Z} \right)  (\omega ) = \mathrm{Tr}[ \hat{\rho}_{0|t}^{\pm}(\omega ) \hat{Z}  ] $ 
	for all $\omega \in \Omega $ and $\hat{Z} \in \mathcal{L}(\mathcal{H}_{S})$.  
	Then, for $\hat{X}_{0} = \hat{X}$, we can compute the following equations in stead of the quantum smoother;  
	\begin{align}
	d\hat{\rho}_{0|t}^{\pm} =&
	\frac{1}{2}
	\Big{\{}
	\left[  \hat{\rho}_{0 |t}^{+} , \left( \hat{L} + \hat{L}^{\ast} \right) \right] _{\pm}
	+
	\left[ \hat{\rho}_{0 |t}^{-} , \left( \hat{L} + \hat{L}^{\ast} \right) \right] _{\mp}
	\nonumber \\ & \quad \quad 
	-2 \mathrm{Tr}[ \hat{\rho}_{t} ^{+} (\hat{L} + \hat{L}^{\ast} ) ] \hat{\rho}_{0 | t}^{\pm}
	\Big{\}}
	\nonumber \\ & \hspace{2cm} 
	\times 
	\left( dy_{t} 
	-
	\mathrm{Tr}[ \hat{\rho}_{t} (\hat{L} + \hat{L}^{\ast} ) ] dt
	\right)
	\end{align}
	where $\hat{\rho}_{t}$ is the solution of the stochastic master equation \eqref{eq_stochastic_master_equation}.  

\section{CONCLUSIONS}

	We developed a new quantum smoothing theory for fixed smoothing problems.  
	It is characterized by two kinds of orthogonality based on two pre--inner products.  
	As a result, we obtain two smoothers; 
	one is the complex valued estimator and the other is real valued estimator for quantum physical quantities.  
	To develop numerical methods for implementation of the smoothers is necessary in the future.



%
%


\addtolength{\textheight}{-12cm}   

%



%
%
%
%
%

\end{document}